%% file: main.tex
\documentclass[conference,10pt]{IEEEtran}
\usepackage{amssymb,amsmath,amsfonts,bm,epsfig,graphicx,theorem,latexsym}
\usepackage{rotating,setspace,latexsym,epsf,color,subcaption,caption}
\usepackage{cite,comment,url}
\usepackage{ifthen}

\usepackage{tikz}

\newtheorem{Lemma}{Lemma}

\newcommand{\egemen}[1]{\textcolor{blue}{Egemen: #1}}

\input{./math_def}

\newboolean{arXiv}
\setboolean{arXiv}{true}

\begin{document}
\IEEEoverridecommandlockouts
\pagestyle{empty}

\title{Age of Information Optimization in Distributed Sensor Networks with Half-Duplex Channels}

\author{\IEEEauthorblockN{Peng Zou$^1$, Ali Maatouk$^2$, Egemen Erbayat$^3$, Suresh Subramaniam$^3$}
\IEEEauthorblockA{
\textit{$^{1}$Nanjing University of Information Science and Technology}, \textit{$^{2}$Yale University}, \textit{$^{3}$The George Washington University}\\
$^1$\textit{003967@nuist.edu.cn}, $^2$\textit{ali.maatouk@yale.edu}, $^3$\textit{\{erbayat, suresh\}@gwu.edu}} 
}

\maketitle 

\begin{abstract}
Motivated by cooperative distributed networks in which users dynamically alternate between transmit and receive modes under half-duplex constraints, this paper studies the Age of Information (AoI) in a distributed multi-user network using an ALOHA-based protocol. We derive closed-form expressions for the average AoI and formulate an optimization problem over transmission probabilities. After proving the convexity of the problem, we leverage the derived optimality conditions to characterize optimal policies for general network graphs, obtain closed-form solutions for $d$-regular topologies, and derive tractable optimality conditions for star topologies. 
Numerical results confirm that the proposed mechanism can effectively and adaptively determine user-specific optimal transmission probabilities across varying network topologies. These findings contribute to the design of adaptive and efficient distributed networks with enhanced information freshness.

\end{abstract}
\input{chapters/intro}
\input{chapters/system}
\input{chapters/Analysis}
\input{chapters/num}
\input{chapters/concl}

\ifthenelse{\boolean{arXiv}}{}{\newpage}

	\bibliographystyle{ieeetr}  
	\bibliography{references}

\end{document}

%% file: math_def.tex
\def \n2{{N_0 \over 2}}

\def \h5{\hspace{0.5in}}

%% file: chapters/intro.tex
\section{Introduction}

Timely delivery of status updates is a fundamental requirement in many mission-critical communication and control systems. To quantify the timeliness of received information, the Age of Information (AoI) metric was introduced in \cite{Kaul2012} to measure the freshness of status updates at the receiver. In recent years, AoI has become a widely adopted performance metric for systems that demand timely information and has motivated extensive research on optimizing AoI across a variety of application areas \cite{tripathi2022information,zhao2024age,mobihoc2024age,Jones2026}.

  In parallel with these developments, many emerging communication and control systems operate in distributed networked settings, where information generation, transmission, and decision-making are decentralized \cite{he2022collaborative, feng2022joint}. In such systems, multiple agents generate local status updates and exchange information with neighboring nodes to support decision-making and coordination. Instead of a global controller or full system knowledge, each agent operates based on partial and locally available information, which underscores the importance on timely information exchange \cite{tahir2024collaborative}.

Vehicular networks exemplify this class of distributed systems \cite{chen2024end}. Although modern vehicles are equipped with rich onboard sensors, local perception alone is often insufficient due to occlusions, limited sensing range, and rapidly changing traffic conditions \cite{han2023collaborative,malik2023collaborative}. To address these limitations, vehicles exchange locally generated state information such as object detections, trajectories, and motion intent through direct vehicle-to-vehicle communication, which supports cooperative perception and situational awareness \cite{darbha2018benefits,ngo2023cooperative,arnold2020cooperative}.

In these distributed settings, the freshness of received updates is essential \cite{chiariotti2025voi,prasad2021decentralized}. Outdated information about neighboring vehicles or surrounding processes can lead to unsafe or inefficient decisions \cite{zhou2023age}. This makes AoI a natural performance metric for communication protocol design in distributed networks, where the primary objective is timely exchange of locally generated state information among neighboring nodes rather than centralized computation or task offloading\cite{baldesi2019keep,zhou2024information}. In addition to local information exchange mechanisms, gossip networks represent another important communication paradigm for distributed systems, in which nodes broadcast and relay updates, often originating from a common source, to achieve scalable and robust information dissemination under decentralized operation \cite{Kaswan2025, Maranzatto2025, Delfani2025}.

Motivated by this perspective, we consider a distributed network where each user acts both as a sensor/transmitter, generating local updates, and as a receiver, tracking processes observed by neighbors. Users dynamically alternate between transmission and reception under half-duplex constraints, coordinated by an ALOHA-based random access protocol over shared wireless channels, compared with previous AoI-ALOHA works \cite{Wang2023,Zhao2024,Yavascan2021,Munari2021}, where a central monitor receives packets from different sources, our work focuses on a decentralized system. A link succeeds only when one user transmits while its neighbors listen; simultaneous transmissions cause collisions. Unlike classical gossip networks, each user here maintains fresh information about multiple independent local processes, a formulation that naturally reflects cooperative distributed networks, where users continuously track the dynamic states of nearby agents through direct, distributed information exchange. To that end, our contributions are summarized below:

\begin{itemize}
    \item We first introduce the system model and clarify the distinct roles that different users play in the distributed network. Moreover, we design an ALOHA-based transmission protocol to support users in dynamically switching between transmission and reception modes according to network conditions and transmitting data updates to neighboring nodes.
    \item Next, we derive a closed-form expression for the AoI in this network scenario and further investigate the optimization problem based on the probabilities for those users to be in transmit mode. 
    \item Our analysis shows that network topology plays a decisive role in shaping optimal transmission strategies. In $d$-regular graphs, symmetry yields a closed-form optimal transmission probability that depends solely on the node's degree, while star topologies induce asymmetric optimal behavior between hub and leaf nodes. These insights link local connectivity to optimal mode-switching decisions and help interpret solutions in more general networks.
    \item Finally, numerical results confirm that the proposed mechanism can effectively and adaptively determine user-specific optimal probabilities for the users to be transmitters across varying network topologies. These findings contribute to the design of adaptive and efficient distributed networks with enhanced information freshness.
\end{itemize}

%% file: chapters/system.tex
\begin{figure}[!t] 	\centering\includegraphics[width=2.8in]{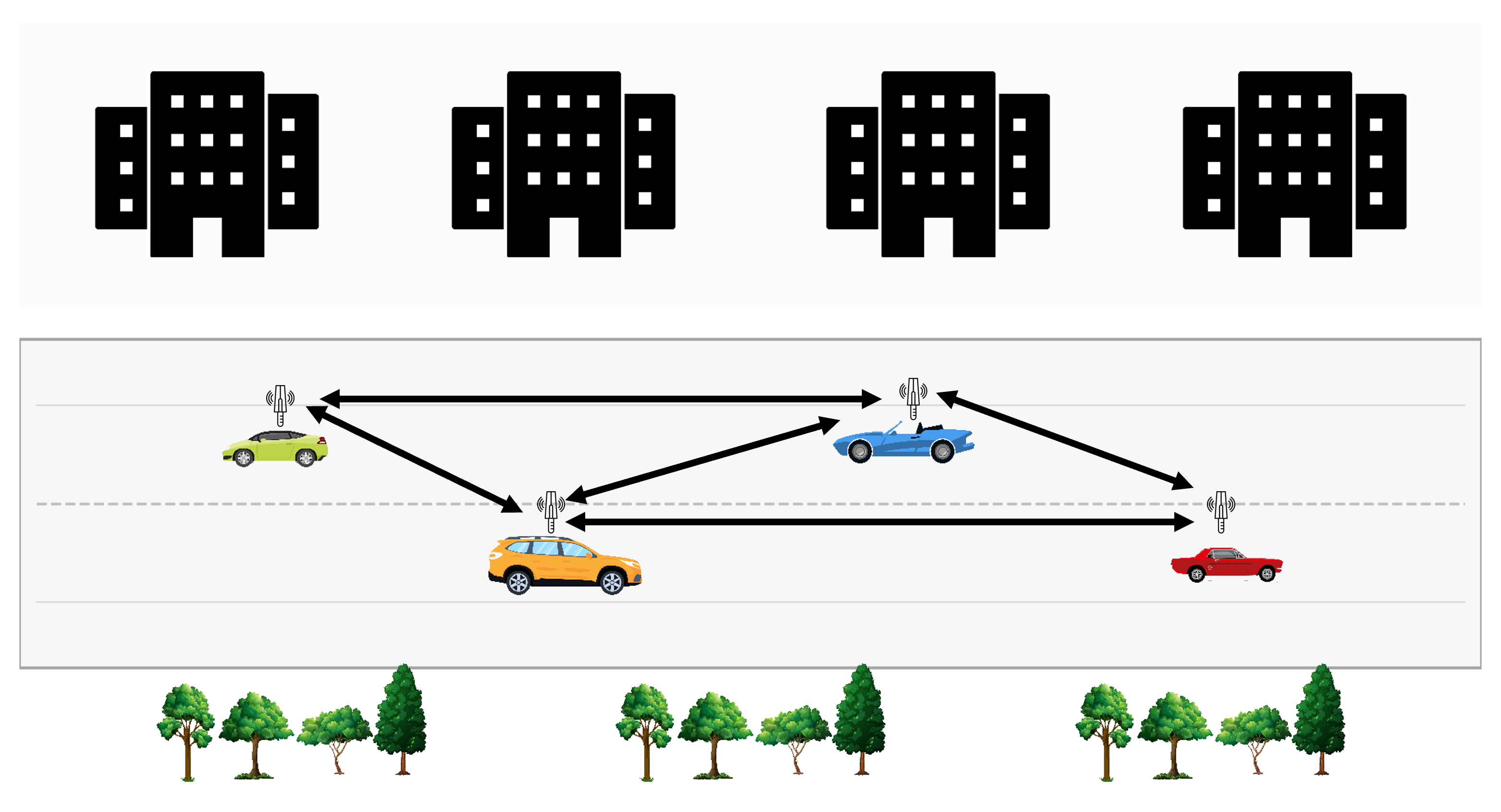}
	\caption{An illustration of data sharing structure.} \label{system_model}\vspace{-0.3in}
\end{figure}
\vspace{-0.05in}
\section{System Model}\vspace{-0.05in}
\label{sec:sys}
In the distributed network shown in Fig. \ref{system_model}, which consists of multiple vehicle users, each user can serve as both a sensor that monitors the environment and a receiving server that processes the sensed data. Since the sensed information of one user may be useful to others, users are allowed to exchange data directly. 

To represent the resulting communication architecture, we model the system as a network of $N$ users interconnected by a fixed bidirectional graph, where each edge indicates a potential communication link between two nodes in either direction. Let $\mathbf{B}_i$ denote the neighbor set of user $i$ that a transmission from user $i$ is received by all users in $\mathbf{B}_i$ if they are in receive mode.  
Building on this model, we now define the performance metric used to evaluate the timeliness of the information exchanged among the users.\vspace{-0.05in}


\subsection{Age of Information}\vspace{-0.05in}
Each user tracks the AoI for updates received from its neighbors. In particular, for any user $j \in \mathbf{B}_i$, we define the AoI of updates generated by user $j$ and observed at user $i$ at time $t$ as\vspace{-0.1in}
\begin{align}
    \Delta_{i,j}(t)=t-G_{i,j}(t),
\end{align}
where $G_{i,j}(t)$ denotes the generation time of the most recently received update from user $j$ at user $i$ by time $t$. Therefore, the average AoI from user $j$ at user $i$ is given by \vspace{-0.1in}
\begin{align}
     \Bar{\Delta}_{i,j}&=\lim_{T \to \infty}\frac{1}{T}\sum_{t=1}^{T}\Delta_{i,j}(t).
\end{align}

We next describe the medium access and transmission mechanism that governs how updates are generated and delivered over the network. 
\subsection{ALOHA-based Transmission Protocol}
We adopt a time-slotted ALOHA protocol with the following assumptions:
\begin{itemize}
    \item At each time slot, user $i$ generates a new update packet with probability $p_i$.
    \item There is no buffer to store packets, and the packets that were not transmitted are dropped.
    \item 
    A user who either has no fresh packet or chooses not to transmit will remain in receive mode. 
    \item User $i$ holding a fresh update packet chooses to enter transmit mode with probability $q_i$ and broadcasts the packet to its neighbors. With probability $1-q_i$, it discards the packet and remains in receive mode instead. 
    \item 
    If two neighboring users transmit simultaneously, the mutual broadcast between them causes a collision, preventing successful packet exchange in either direction. 
    \item If multiple packets are transmitted to node $i$ by its neighbors, then none of these packets are received successfully.
\end{itemize}
\vspace{-0.05in}

\ifthenelse{\boolean{arXiv}}{
\begin{figure}[!t] 	\centering\includegraphics[height=2in]{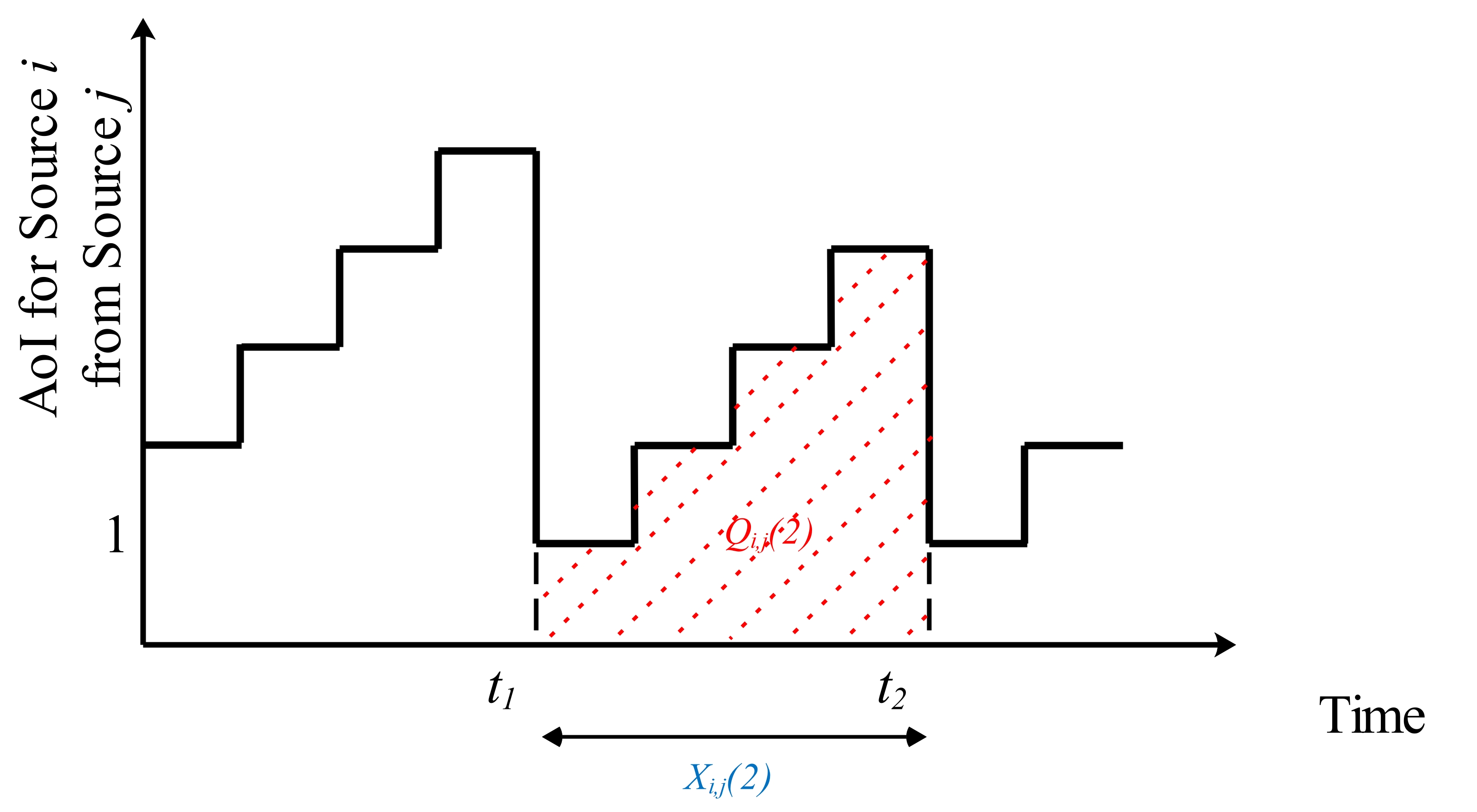}
	\caption{Evolution of AoI for source $i$ in the multiple sources system.} \label{evo}
\end{figure}
We show the AoI evolution for an arbitrary source $i$ receiving packets from source $j$ in Fig.~\ref{evo}.
Under the above transmission model, successful packet deliveries from user $j$ to user $i$ occur randomly over time. For a given network topology, the inter-arrival time between two successfully delivered packets, denoted by $X_{i,j}$, follows a geometric distribution with success probability $\mu_{i,j}$ (to be determined). In Fig.~\ref{evo}, the first packet for source $i$ is transmitted at time slot $t_1$ and the second packet at time slot $t_2$; therefore $X_{i,j}(2)=t_2-t_1$. Let us define $Q_{i,j}(n)$ as the area under the AoI curve between $t_{n-1}$ and $t_n$. For example, $Q_{i,j}(2)$ is shown as the shaded area in Fig.~\ref{evo}.  

Now, $Q_{i,j}(n)=\frac{X_{i,j}(n)(X_{i,j}(n)+1)}{2}$. The average AoI for source $i$ can then be written as:
\begin{align*}
    \Bar{\Delta}_{i,j}=\lim_{t\to\infty}\frac{\sum_{n=1}^{n^*_{i,j}(t)}Q_{i,j}(n)}{\sum_{n=1}^{n^*_{i,j}(t)}X_{i,j}(n)}=\lim_{t\to\infty}\frac{\sum_{n=1}^{n^*_{i,j}(t)}Q_{i,j}(n)/n^*_{i,j}(t)}{\sum_{n=1}^{n^*_{i,j}(t)}X_{i,j}(n)/n^*_{i,j}(t)}
\end{align*}
where $n^*_{i,j}(t)$ is the index of the most recent delivered packet from source $j$ at source $i$ by time slot $t$. Therefore, we have:
\begin{align}\label{avg}
\nonumber\Bar{\Delta}_{i,j}&=\frac{E[Q_{i,j}]}{E[X_{i,j}]}=\frac{\frac{E[X_{i,j}(X_{i,j}+1)]}{2}}{E[X_{i,j}]}\\
&=\frac{E[X_{i,j}^2]}{2E[X_{i,j}]}+\frac{1}{2}=\frac{2-\mu_{i,j}}{2\mu_{i,j}}+\frac{1}{2}=\frac{1}{\mu_{i,j}}.
\end{align}}{Under the above transmission model, successful packet deliveries from user $j$ to user $i$ occur randomly over time. For a given network topology, the inter-arrival 
time between two successfully delivered packets, denoted by $X_{i,j}$, follows a geometric distribution with success probability $\mu_{i,j}$ (to be determined). Then, the average AoI at user $i$ for updates from user $j$ becomes\vspace{-0.1in}
\begin{align}
\Bar{\Delta}_{i,j}&=\frac{\frac{E[X_{i,j}(X_{i,j}+1)]}{2}}{E[X_{i,j}]}=\frac{E[X_{i,j}^2]}{2E[X_{i,j}]}+\frac{1}{2}\\
\nonumber &=\frac{2-\mu_{i,j}}{2\mu_{i,j}}+\frac{1}{2}=\frac{1}{\mu_{i,j}}.
\end{align}
A detailed analysis can be found in our arXiv version\cite{}.}

 Building on this system model, we next study how to optimally choose the transmission probabilities to minimize the overall AoI performance of the network.
\vspace{-0.05in}

%% file: chapters/Analysis.tex
\section{AoI Optimization Analysis}
\label{sec:Analysis}
\vspace{-0.05in}
This section derives a closed-form expression for the average AoI and studies its optimization properties.
\vspace{-0.05in}
\subsection{Average AoI}
\vspace{-0.05in}

For a given network topology, a transmission from user $j$ to user $i$ is successful if user $i$ operates in receive mode and all neighbors of $i$, except $j$, are also receivers. Under this condition, the success probability of link $(i,j)$ is\vspace{-0.05in}
\begin{align}
\mu_{i,j}
= p_j q_j (1 - p_i q_i)
\prod_{k \in \mathcal{B}_i,\, k \neq j} (1 - p_k q_k).
\end{align}
The corresponding average AoI is therefore\vspace{-0.05in}
\begin{align}
\bar{\Delta}_{i,j}
= \frac{1}{\mu_{i,j}}
= \frac{1}{p_j q_j (1 - p_i q_i)
\prod_{k \in \mathcal{B}_i,\, k \neq j} (1 - p_k q_k)}.
\end{align}

To reduce complexity and focus on the impact of transmission decisions, we assume homogeneous update generation probabilities and set $p_i = p$ for all nodes. It is worth noting that our analysis is also applicable to scenarios with asymmetric $p$ values. Let $f(\mathbf{q})$ denote the total average AoI across all directed links. The AoI minimization problem can then be written as\vspace{-0.05in}
\begin{align}
\label{optP}
\min_{\{q_i\}} \quad
f(\mathbf{q})
= \sum_{i=1}^{N} \sum_{j \in \mathcal{B}_i} \bar{\Delta}_{i,j}
\quad
\text{subject to}
\quad
0 \le q_i \le 1, \ \forall i.
\end{align}

 The optimization of this problem is challenging because the decision variable $q$ appears in the denominator of the objective function. To address this, we present the following convexity proof.


\begin{Lemma}
\label{Lem1}
The objective function $f(\mathbf{q})$ defined in \eqref{optP} is convex over the feasible set.
\end{Lemma}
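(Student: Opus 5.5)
The plan is to show that every summand $\bar{\Delta}_{i,j}$ in \eqref{optP} is convex in $\mathbf{q}$; then $f(\mathbf{q})$ is convex as a nonnegative sum of convex functions. For each directed link $(i,j)$ I will exhibit $\bar{\Delta}_{i,j}$ as a \emph{log-convex} function, that is, the exponential of a convex function. Log-convexity implies convexity, and it avoids computing a Hessian of a product of many reciprocal factors.

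\textbf{Step 1 (write each term as an exponential).} With $p_k=p$ for all $k$, I write
\begin{align*}
\bar{\Delta}_{i,j}(\mathbf{q}) = \exp\Big( -\log(p q_j) - \log(1-p q_i) - \sum_{k\in\mathcal{B}_i,\,k\neq j}\log(1-p q_k) \Big) .
\end{align*}
Call the exponent $g_{i,j}(\mathbf{q})$. Each piece of $g_{i,j}$ is minus the logarithm of an affine function of $\mathbf{q}$. Since $-\log$ is convex and convexity is preserved under affine precomposition, each piece is convex wherever its argument is positive. Hence $g_{i,j}$ is convex.

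\textbf{Step 2 (pass through the exponential).} The map $\exp$ is convex and nondecreasing. Its composition with the convex function $g_{i,j}$ is therefore convex, so $\bar{\Delta}_{i,j}=e^{g_{i,j}}$ is convex. Summing over all $i$ and all $j\in\mathcal{B}_i$ then gives convexity of $f$.

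\textbf{Step 3 (domain).} This is the main technical point to handle carefully. The arguments $pq_j$ and $1-pq_k$ must be strictly positive. For $1-pq_k$ this holds on all of $[0,1]$ provided $p<1$; the case $p=1$ is degenerate, since then $q_k=1$ forces zero success probability. For $pq_j$ it fails at $q_j=0$, where $\bar{\Delta}_{i,j}=+\infty$. I will treat $f$ as an extended-value convex function on $[0,1]^N$. Concretely, I restrict to the convex set $(0,1]^N$ when $p<1$, and I note that any point with some $q_j=0$ (for a node $j$ with at least one neighbor) gives $f=+\infty$. Such points therefore neither break convexity of the extended-value function nor can be optimal.

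A direct Hessian check of a single term is an alternative route. It would show that $\nabla^2\bar{\Delta}_{i,j}=\bar{\Delta}_{i,j}\left(\nabla g\nabla g^\top+\nabla^2 g\right)$, with $\nabla^2 g$ diagonal and positive. This confirms positive semidefiniteness and could be included as a sanity check, but the composition argument suffices.
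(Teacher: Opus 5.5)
Your proof is correct and follows the paper's route: take the logarithm of each term $1/\mu_{i,j}$, show it is convex, apply the exponential, and sum. The only differences are that the paper checks convexity of the logarithm by computing its diagonal Hessian rather than by your $-\log$-of-affine composition rule, and that it ignores the boundary, which your extended-value treatment handles (including $p=1$, where $f=+\infty$ whenever a non-isolated node has $q_k=1$).
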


\begin{IEEEproof}
Define
\begin{align}
g_{i,j}(\mathbf{q})
= \frac{1}{p q_j (1 - p q_i)
\prod_{k \in \mathcal{B}_i,\, k \neq j} (1 - p q_k)}.
\end{align}
Since $f(\mathbf{q})$ is a sum of terms $g_{i,j}(\mathbf{q})$, it suffices to show that each $g_{i,j}(\cdot)$ is convex.
Direct analysis of the Hessian of $g_{i,j}$ is analytically intractable. Instead, consider the logarithmic transformation:
\begin{align*}
    G_{i,j}(\mathbf{q})&=\log g_{i,j}(\mathbf{q})\\&=-\log pq_j-\log(1-pq_i)-\sum_{k\in \mathbf{B}_i,k\neq j}\log (1-pq_k). 
\end{align*}
The first derivatives are
\begin{align*}
\frac{\partial G_{i,j}(\mathbf{q})}{\partial q_\ell}
=
\begin{cases}
-\dfrac{1}{q_\ell}, & \ell = j, \\[6pt]
\dfrac{p}{1 - p q_\ell}, & \ell \neq j \text{ and } \ell \in \{i\} \cup \mathcal{B}_i, \\[6pt]
0, & \text{otherwise},
\end{cases}
\end{align*}
and the second derivatives are
\begin{align*}
&\frac{\partial^2 G_{i,j}(\mathbf{q})}{\partial q_\ell^2}
=
\begin{cases}
\dfrac{1}{q_\ell^2}, & \ell = j, \\[6pt]
\dfrac{p^2}{(1 - p q_\ell)^2}, & \ell \neq j \text{ and } \ell \in \{i\} \cup \mathcal{B}_i,
\end{cases}
\\
&\frac{\partial^2 G_{i,j}(\mathbf{q})}{\partial q_\ell \partial q_m} = 0 \ \text{for } \ell \neq m.
\end{align*}

The Hessian of $G_{i,j}(\mathbf{q})$ is diagonal with strictly positive entries, which implies convexity of $G_{i,j}(\mathbf{q})$. Since $G_{i,j}(\mathbf{q})=\log g_{i,j}(\mathbf{q})$ and the exponential preserves convexity for positive functions \cite{boyd2004convex}, $g_{i,j}(\mathbf{q})$ is convex. As a result, $f(\mathbf{q})$ is convex as a sum of convex functions.
\end{IEEEproof}

Lemma~\ref{Lem1} ensures that the AoI minimization problem is a convex problem. This property allows us to apply first-order optimality conditions to characterize optimal transmission probabilities.

\subsection{Optimality Conditions}

Given the convexity of problem \eqref{optP}, the Karush-Kuhn-Tucker conditions are necessary and sufficient for optimality. We focus on interior solutions and derive an explicit expression that must be satisfied by any optimal transmission probability. To that end, we introduce Lagrange multipliers $\lambda_i \ge 0$ and $\nu_i \ge 0$ associated with the constraints $q_i \ge 0$ and $q_i \le 1$, respectively. The Lagrangian of problem \eqref{optP} is
\begin{align}
\mathcal{L}
= f(\mathbf{q}) - \sum_i \lambda_i q_i + \sum_i \nu_i (q_i - 1).
\end{align}

We now derive the stationary condition for an interior solution in order to obtain an explicit characterization of the optimal transmission probability at each node as a function of its local structure as summarized in the following lemma.

\begin{Lemma}
\label{lem:interior_qstar}
If $0 < q_\ell^* < 1$ is optimal for problem \eqref{optP}, then it satisfies
\begin{align}
\label{eq:qstar_closed_form}
q_\ell^*
= \frac{A_\ell}{p(A_\ell + B_\ell)}.
\end{align}
\end{Lemma}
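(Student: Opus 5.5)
Since $0<q_\ell^*<1$, complementary slackness forces $\lambda_\ell=\nu_\ell=0$, so stationarity of $\mathcal{L}$ reduces to $\partial f/\partial q_\ell=0$ at $\mathbf{q}^*$. By Lemma~\ref{Lem1} this condition is also sufficient. The plan is to compute $\partial f/\partial q_\ell$ from the log-derivatives already obtained in the proof of Lemma~\ref{Lem1}, using $\partial g_{i,j}/\partial q_\ell = g_{i,j}\,\partial G_{i,j}/\partial q_\ell$, and then sort the terms by sign.

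From the first-derivative table in the proof of Lemma~\ref{Lem1}, a term $g_{i,j}$ depends on $q_\ell$ in exactly two ways:
\begin{itemize}
\item[(a)] Through the factor $1/q_\ell$ when $\ell=j$. This happens for the links $(i,\ell)$ with $i\in\mathcal{B}_\ell$, i.e., when $\ell$ is the transmitter. It contributes $-g_{i,\ell}/q_\ell$.
\item[(b)] Through the factor $1/(1-pq_\ell)$ when $\ell\neq j$ and $\ell\in\{i\}\cup\mathcal{B}_i$. This happens when $\ell$ is the receiver ($i=\ell$, $j\in\mathcal{B}_\ell$), or $\ell$ is an interfering neighbor of receiver $i$ other than the transmitter $j$. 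It contributes $p\,g_{i,j}/(1-pq_\ell)$.
\end{itemize}

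Accordingly, define
\begin{align*}
A_\ell = \sum_{i\in\mathcal{B}_\ell} g_{i,\ell}(\mathbf{q}^*),\qquad
B_\ell = \sum_{(i,j)\in\mathcal{S}_\ell} g_{i,j}(\mathbf{q}^*),
\end{align*}
where $\mathcal{S}_\ell=\{(i,j): j\in\mathcal{B}_i,\ j\neq \ell,\ \ell\in\{i\}\cup\mathcal{B}_i\}$ is the set of directed links whose success requires $\ell$ to be silent. (This is presumably how the paper's $A_\ell,B_\ell$ are meant; I would state these definitions explicitly.) The stationarity condition then reads
\begin{align*}
-\frac{A_\ell}{q_\ell^*}+\frac{p\,B_\ell}{1-pq_\ell^*}=0 .
\end{align*}
Cross-multiplying gives $A_\ell(1-pq_\ell^*)=pB_\ell q_\ell^*$, hence $q_\ell^*=A_\ell/\big(p(A_\ell+B_\ell)\big)$, which is \eqref{eq:qstar_closed_form}.

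The main obstacle is the bookkeeping in step (b): every link in which $\ell$ appears must be counted exactly once. In particular, the receiver role $i=\ell$ and the neighbor-of-receiver role must not be double-counted, and the link $(i,j)$ with $j=\ell$ must be excluded. To settle this, I would check the derivative table case by case.

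One subtlety should be noted. $A_\ell$ and $B_\ell$ themselves depend on $q_\ell^*$, since $g_{i,\ell}$ contains $1/q_\ell$ and the terms in $B_\ell$ contain $1/(1-pq_\ell)$. The statement is therefore a fixed-point characterization rather than an explicit formula. This is consistent with the lemma's wording "satisfies".
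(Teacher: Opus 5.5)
Your proposal is correct and follows essentially the same route as the paper. Complementary slackness removes $\lambda_\ell,\nu_\ell$. The identity $\partial g_{i,j}/\partial q_\ell = g_{i,j}\,\partial G_{i,j}/\partial q_\ell$ splits $\partial f/\partial q_\ell$ into $-A_\ell/q_\ell + pB_\ell/(1-pq_\ell)$, and cross-multiplying gives the formula. Your $A_\ell$ and $B_\ell$ (via $\mathcal{S}_\ell$) coincide with the paper's, because the graph is bidirectional and has no self-loops, and your fixed-point remark matches how the paper later uses the lemma. One aside is overstated: the single-coordinate condition is not ``also sufficient'' on its own, since sufficiency via Lemma~\ref{Lem1} needs the stationarity conditions for all coordinates jointly. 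The lemma only claims necessity, so this does not affect your proof.
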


\begin{IEEEproof}
For an interior solution satisfying $0 < q_\ell < 1$, complementary slackness implies $\lambda_\ell = \nu_\ell = 0$, and the stationarity condition reduces to
\begin{align}
\frac{\partial f}{\partial q_\ell} = 0.
\end{align}
Writing $f(\mathbf{q}) = \sum_{i} \sum_{j \in \mathcal{B}_i} g_{i,j}(\mathbf{q})$ and using
\begin{align}
\frac{\partial g_{i,j}}{\partial q_\ell}
= g_{i,j} \frac{\partial \log g_{i,j}}{\partial q_\ell},
\end{align}
the derivative of $f$ with respect to $q_\ell$ can be expressed as
\begin{align}
\label{KKT1}
\frac{\partial f}{\partial q_\ell}
= -\frac{A_\ell}{q_\ell}
+ \frac{p}{1 - p q_\ell} B_\ell,
\end{align}
where
\begin{align}
A_\ell &= \sum_{i : \ell \in \mathcal{B}_i} g_{i,\ell}, \\
B_\ell &= \sum_{j \in \mathcal{B}_\ell} g_{\ell,j}
+ \sum_{i : \ell \in \mathcal{B}_i} \sum_{j \in \mathcal{B}_i,\, j \neq \ell} g_{i,j}.
\end{align}
Setting \eqref{KKT1} to zero yields
\begin{align}
A_\ell (1 - p q_\ell) = p B_\ell q_\ell,
\end{align}
which leads to the interior-point solution
\begin{align}
q_\ell^*
= \frac{A_\ell}{p(A_\ell + B_\ell)}.
\end{align}
\end{IEEEproof}
Lemma~\ref{lem:interior_qstar} provides an explicit interior-point characterization that depends only on the aggregate terms $A_\ell$ and $B_\ell$. This form becomes especially useful in symmetric topologies, where these aggregates simplify and lead to closed-form network-wide solutions.

\subsection{Optimal Probabilities for d-Regular Graphs}

The fixed-point characterization derived in the previous subsection applies to general network topologies but does not provide a closed-form solution easily in most cases. To gain analytical insight, we now focus on symmetric networks in which all nodes experience identical transmission conditions. This symmetry allows the optimality conditions to collapse to a scalar equation.

We consider a $d$-regular graph, where each node has exactly $d$ neighbors. All nodes are assumed to employ a common transmission probability $q$. Under this assumption, all directed links are statistically identical. As a consequence of this symmetry, the AoI minimization problem reduces to a scalar optimization, which results in Lemma~\ref{lem:d-regular}.

\begin{Lemma}
\label{lem:d-regular}
For a $d$-regular graph with homogeneous update probability $p$, the optimal transmission probability that minimizes the total average AoI is
\begin{align}\label{optq_s}
q^* = \min \left\{ \frac{1}{p(d+1)},\, 1 \right\}.
\end{align}
\end{Lemma}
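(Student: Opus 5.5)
We plan to reduce the problem to a one-dimensional convex minimization by exploiting the symmetry of the $d$-regular graph, and then solve it directly.

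\emph{Symmetry reduction.} By Lemma~\ref{Lem1}, $f(\mathbf{q})$ is convex on $[0,1]^N$. The automorphism group of the graph acts on $\mathbf{q}$ by permuting coordinates and leaves $f$ invariant. Averaging any optimal $\mathbf{q}$ over this group therefore gives a point with no larger objective. For vertex-transitive graphs this point is the constant vector. For a general $d$-regular graph we instead adopt the homogeneous assumption stated before the lemma, $q_i=q$ for all $i$.

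\emph{Scalar objective.} With $q_i=q$, every directed link $(i,j)$ satisfies
\begin{align*}
\mu_{i,j}=pq(1-pq)(1-pq)^{d-1}=pq(1-pq)^d .
\end{align*}
There are $Nd$ directed links, so
\begin{align*}
f(q)=\frac{Nd}{pq(1-pq)^d}.
\end{align*}
Minimizing $f$ is therefore equivalent to maximizing $h(q)=q(1-pq)^d$ over $q\in(0,1]$; note that $pq\le p\le 1$. We will take logarithms. The function $\log h=\log q+d\log(1-pq)$ is strictly concave, with derivative $1/q-dp/(1-pq)$. This derivative vanishes exactly at $q=1/(p(d+1))$.

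As a cross-check, the same value follows from Lemma~\ref{lem:interior_qstar}. Each term $g_{i,j}$ equals a common value $g$. For node $\ell$, $A_\ell=dg$, since $\ell$ has $d$ out-links to receivers. Also $B_\ell=dg+d(d-1)g=d^2g$: $\ell$ acts as a receiver on its own $d$ incoming links, and as an interfering neighbor on the remaining $d-1$ links into each of its $d$ neighbors. Lemma~\ref{lem:interior_qstar} then gives
\begin{align*}
q^*=\frac{dg}{p(dg+d^2g)}=\frac{1}{p(d+1)}.
\end{align*}

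\emph{Boundary case.} If $1/(p(d+1))\ge 1$, the derivative of $\log h$ is positive on all of $(0,1)$. In that case $h$ is increasing on the feasible set, so the optimum is $q=1$. Combining the two cases gives $q^*=\min\{1/(p(d+1)),1\}$.

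\emph{Main obstacle.} The delicate step is justifying the restriction to a common $q$. For non-vertex-transitive regular graphs, the group-averaging argument does not by itself give the constant vector. I would first try to show directly that the constant vector satisfies the full KKT conditions of \eqref{optP}. This holds because $A_\ell$ and $B_\ell$ take the same values at every node. Convexity (Lemma~\ref{Lem1}) then makes this stationary point globally optimal. In the boundary case, at $q=1$ one checks $\partial f/\partial q_\ell\le 0$, so the KKT conditions hold with $\nu_\ell\ge 0$.
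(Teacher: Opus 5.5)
Your proof is correct and follows the paper's route: with a common $q$, every directed link has $\mu = pq(1-pq)^d$, the first-order condition gives $q = 1/(p(d+1))$, and the constraint $q\le 1$ produces the $\min\{\cdot,1\}$. Your extra steps strengthen what the paper proves, since the paper simply assumes a common $q$ and clips without comment. The strict concavity of $\log h$ justifies the clipping, and the KKT check at the constant vector (where $A_\ell = dg$ and $B_\ell = d^2 g$ are the same at every node, so $\partial f/\partial q_\ell$ vanishes, or is $\le 0$ at $q=1$ when $p(d+1)\le 1$), together with the convexity from Lemma~\ref{Lem1}, shows the homogeneous point is optimal over all of $[0,1]^N$, not only among common-$q$ policies.
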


\begin{IEEEproof}
Under the symmetry assumptions, the success probability of any directed link is
\begin{align}
\mu = p q (1 - p q)^d.
\end{align}
Since all nodes contribute equally to the total average AoI, minimizing the total AoI is equivalent to minimizing $1 / \mu$. Then, taking the derivative of $1 / \mu$ with respect to $q$ and setting it equal to zero yields the first-order condition

\begin{align}
q = \frac{1}{p(d+1)}.
\end{align}
Afterward, imposing the feasibility constraint $0 \le q \le 1$ completes the proof.
\end{IEEEproof}

Lemma~\ref{lem:d-regular} shows that the optimal transmission probability decreases as the node degree increases, which reflects the higher level of collisions present in denser networks.

\subsection{Optimal Probabilities for the Star Topology}

We next consider the star topology, which features a highly asymmetric connectivity structure. One node acts as a central hub connected to all other nodes, while the remaining $N-1$ nodes have no mutual connections. Owing to this asymmetry, the star topology requires a distinct analytical approach. From our analysis of the $d$-regular graph, we can observe that symmetric nodes should share the same optimal $q$. Therefore, we let $q_1$ denote the probability that the central node is a transmitter and $q_2$ that of the leaf node. For analytical clarity, and without loss of generality, we assume the update generation probability $p=1$ for all nodes. The AoI minimization problem for this topology reduces to the objective function
\begin{align}
f(q_1, q_2) = \frac{N-1}{q_2 (1-q_1) (1-q_2)^{\,N-2}} + \frac{N-1}{q_1 (1-q_2)},
\end{align}
where $ N \geq 2$. 
Next, we introduce the following lemma.
\begin{Lemma}
    The optimal transmission probability for the star topology can be derived by solving the following equations
    \begin{align}
1 - (N-1)q_2 = q_2^{3/2} (1-q_2)^{\frac{N-3}{2}}. \label{starq2}\\
q_1^2 = q_2 (1-q_1)^2 (1-q_2)^{N-3}. \label{q1andq2}
\end{align}
\end{Lemma}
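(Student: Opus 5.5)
The plan is to write down the first-order optimality conditions for the reduced two-variable problem and eliminate $q_1$ between them. First I would argue that the optimum is interior and unique. By Lemma~\ref{Lem1}, $f$ is convex in the full vector $\mathbf{q}$. Restricting it to the affine subspace where all leaves share $q_2$ preserves convexity. This restriction loses nothing: averaging any optimal $\mathbf{q}$ over the permutations of the leaves gives a point that is still feasible and, by convexity, no worse. Moreover $f(q_1,q_2)\to\infty$ whenever $q_1\to 0$, $q_1\to 1$, $q_2\to 0$ or $q_2\to 1$ (some denominator vanishes). Hence the minimizer satisfies $0<q_1,q_2<1$, the multipliers in the KKT conditions vanish, and the optimum is characterized exactly by $\partial f/\partial q_1=\partial f/\partial q_2=0$.

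Write $T_1=\frac{1}{q_2(1-q_1)(1-q_2)^{N-2}}$ and $T_2=\frac{1}{q_1(1-q_2)}$, so that $f=(N-1)(T_1+T_2)$. As in the proof of Lemma~\ref{lem:interior_qstar}, I would differentiate through logarithmic derivatives.

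For $q_1$, the condition reads $T_1/(1-q_1)-T_2/q_1=0$. Clearing denominators gives
\begin{align*}
q_1^2=q_2(1-q_1)^2(1-q_2)^{N-3},
\end{align*}
which is \eqref{q1andq2}.

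For $q_2$, the condition reads
\begin{align*}
T_1\Big(-\frac{1}{q_2}+\frac{N-2}{1-q_2}\Big)+\frac{T_2}{1-q_2}=0.
\end{align*}
Multiplying by $(1-q_2)$ gives $T_1\,\frac{1-(N-1)q_2}{q_2}=T_2$, so
\begin{align*}
1-(N-1)q_2=\frac{q_2T_2}{T_1}=\frac{q_2^2(1-q_1)(1-q_2)^{N-3}}{q_1}.
\end{align*}

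The key elimination step uses \eqref{q1andq2}. Taking the positive square root (valid since all quantities are in $(0,1)$) gives
\begin{align*}
\frac{1-q_1}{q_1}=q_2^{-1/2}(1-q_2)^{-\frac{N-3}{2}}.
\end{align*}
Substituting this ratio into the previous display yields
\begin{align*}
1-(N-1)q_2=q_2^{3/2}(1-q_2)^{\frac{N-3}{2}},
\end{align*}
which is \eqref{starq2}. Conversely, any solution pair of \eqref{starq2}–\eqref{q1andq2} in $(0,1)^2$ reverses these steps and satisfies both stationarity equations, so by convexity it is the optimum.

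The main obstacle I expect is rigor in the reduction, not the algebra. Two points need care. The first is the symmetry argument that justifies a common $q_2$ for all leaves. The second is the claim that the stationary point is a global minimizer, which rests on convexity in $(q_1,q_2)$ inherited from Lemma~\ref{Lem1}. If desired, existence of a root $q_2\in(0,\tfrac{1}{N-1})$ of \eqref{starq2} can also be checked by the intermediate value theorem: the left side minus the right side is positive at $q_2=0$ and negative at $q_2=\tfrac{1}{N-1}$. The edge case $N=2$, where $(1-q_2)^{-1/2}$ appears, is handled by the same formulas.
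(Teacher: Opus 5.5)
Your proposal is correct and follows the same route as the paper. You set $\partial f/\partial q_1=\partial f/\partial q_2=0$ for the reduced two-variable objective, read off \eqref{q1andq2} from the first condition, and substitute the positive root $q_1/(1-q_1)=\sqrt{q_2}(1-q_2)^{(N-3)/2}$ into the second to obtain \eqref{starq2}. Your extra steps (symmetrizing over the leaves via convexity, interiority from the blow-up of $f$ at the boundary, and sufficiency of stationarity) make rigorous what the paper only asserts informally; the uniqueness you announce at the outset is never argued, but it would follow from strict convexity and is not needed for the statement.
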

\begin{IEEEproof}
The partial derivative with respect to $q_1$ is
\begin{align*}
\frac{\partial f}{\partial q_1} = \frac{N-1}{q_2 (1-q_2)^{N-2}}  \frac{1}{(1-q_1)^2} - \frac{N-1}{q_1^2 (1-q_2)}.
\end{align*}
Setting $\partial f / \partial q_1 = 0$ leads to
\begin{align*}
\frac{1}{q_2 (1-q_2)^{N-2}}  \frac{1}{(1-q_1)^2} = \frac{1}{q_1^2 (1-q_2)}.
\end{align*}
Rearranging terms yields the first optimality condition
\begin{align*}
q_1^2 = q_2 (1-q_1)^2 (1-q_2)^{N-3}. 
\end{align*}

The partial derivative with respect to $q_2$ is more involved
\begin{align*}
\frac{\partial f}{\partial q_2} = & -(N-1)[ \frac{1}{q_2^2 (1-q_1)(1-q_2)^{N-2}}\\
&+ \frac{(2-N)}{q_2 (1-q_1)(1-q_2)^{N-1}} ] + \frac{N-1}{q_1 (1-q_2)^2}.
\end{align*}
Setting $\partial f / \partial q_2 = 0$ yields
\begin{align}
-\left[ \frac{1-q_2}{q_2} + (2-N) \right] + q_2 (1-q_2)^{N-3}\frac{1-q_1}{q_1} = 0.
\end{align}
Substituting the relation for $1/(1-q_1)$ derived from rearranging (\ref{q1andq2}), namely $q_1/(1-q_1) = \sqrt{q_2}(1-q_2)^{(N-3)/2}$, into the above equation leads, after algebraic manipulation, to the second optimality condition
\begin{align*}
1 - (N-1)q_2 = q_2^{3/2} (1-q_2)^{\frac{N-3}{2}}. 
\end{align*}
Equation (\ref{starq2}) is a univariate equation in $q_2$ that can be solved numerically for a given $N$. Subsequently, $q_1^*$ can be obtained from (\ref{q1andq2}).
\end{IEEEproof}
For $N=2$, the network consists of a single link, and the AoI minimization problem trivially yields $q_1=q_2=1/2$. For $N=3$, the star topology reduces to a line topology, and we obtain $q_1=q_2=\frac{3-\sqrt5}{2}$.
For $N>3$, the optimal leaf-node transmission probability $q_2$ is determined by the nonlinear condition in~\eqref{starq2}. By squaring both sides of~\eqref{starq2} and expanding $(1-q_2)^{N-3}$ using the binomial theorem, this condition can be equivalently written as the polynomial equation
\begin{align}
\sum_{k=0}^{N-3} \binom{N-3}{k}(-1)^k q_2^{k+3}
+
2(N-1)q_2
-(N-1)^2 q_2^2
=1, 
\end{align}
which gives a unique solution in the interval $(0,1)$. Once $q_2$ is obtained numerically, the optimal central-node probability $q_1^\star$ follows directly from~\eqref{q1andq2}. 

While a closed-form solution remains elusive for more complex topologies, the proven convexity of the problem enables efficient and reliable computation of the global optimum using established convex optimization tools (e.g., CVX\cite{cvx}).


%% file: chapters/num.tex
\section{Numerical Results}
\label{sec:num}
In this section, we present analytical results for different network topologies, all of which are validated through Monte Carlo simulations. For clarity of presentation, the figures display the average AoI per source, computed as $\frac{f(\mathbf{q})}{N}$  shown in the figures. This scaling does not affect our optimization analysis.

\begin{figure}[!t]
    \centering
    \includegraphics[height=1.5in]{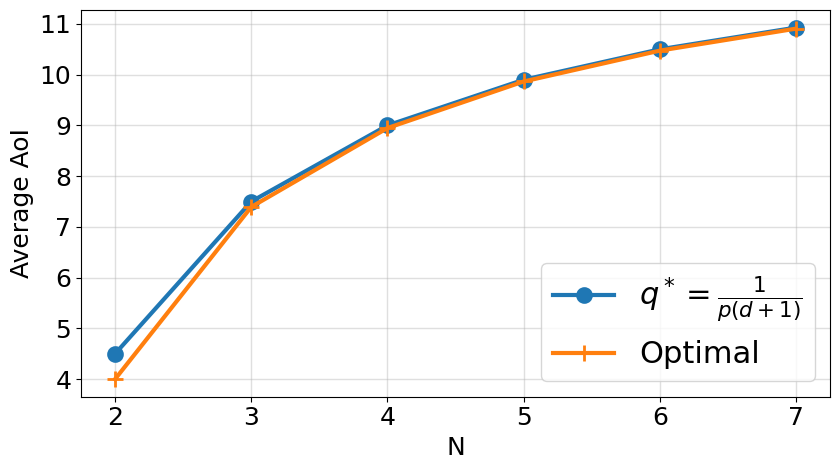}
    \caption{Average AoI with respect to $N$ for line topology with fixed $p=1$. }
    \label{Num1}
     \vspace{0.5cm}
\centering
    \begin{tabular}{|c|c|c|c|c|c|c|c|}
    \hline
    index $i$ & 1 & 2 & 3 & 4 & 5 & 6 & 7 \\
    \hline
    $\mathcal{N}(\mathcal{B}_i)$ & 1 & 2 & 2 & 2 & 2 & 2 & 1 \\
    \hline
    $q_i^*$ & 0.36 & 0.35 & 0.34 & 0.34 & 0.34 & 0.35 & 0.36 \\
    \hline
    \end{tabular}
    \captionof{table}{Optimal $q$ values for different indices for a line topology. }
    \label{Table1} 
\end{figure}
\begin{figure}[!t]
    \centering
    \includegraphics[height=1.5in]{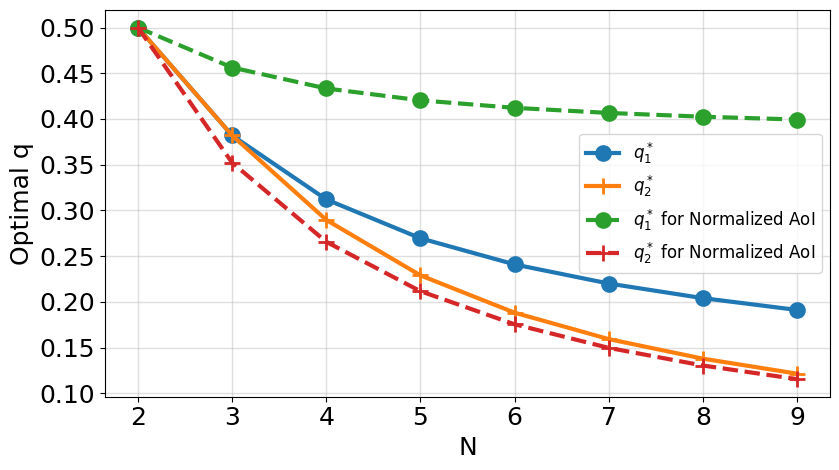}
    \caption{Optimal $q$ with respect to $N$ for a star topology with fixed $p = 1$.}
    \label{Num2}\vspace{-0.2in}
\end{figure}

We start with Figure \ref{Num1}, in which we compare $q^{*} = \frac{1}{p(d+1)}$ from Eq. \eqref{optq_s} with the optimal solution obtained using CVX for a line topology. It can be observed that when $N$ is small, $q^{*} = \frac{1}{p(d+1)}$ still shows a noticeable gap from the true optimal solution derived from CVX. However, as $N$ becomes large, the difference becomes very small. This is because, for large $N$, the line topology can be approximated as a symmetric $d$-regular graph. Additionally, in Table \ref{Table1}, we list the optimal $q$ for each node when $N = 7$, where $\mathcal{N}(\mathcal{B}_i)$ denotes the number of neighbors for user $i$ for a line topology. It can also be seen that the actual optimal $q$ values exhibit a wavelike distribution. Although the optimum for a given node depends only on its immediate neighbors, these neighbors are also influenced by their own neighbors, ultimately causing the optimal solutions for all users to become interdependent.


Next, in Figure \ref{Num2}, we compare optimal $ q_1 $ and $ q_2 $ with increasing $ N $ in a star topology. It can be observed that as $ N $ increases, both $ q_1 $ and $ q_2 $ decrease, but $ q_1 $ decreases at a slower rate, indicating that the transmission probability requirement for the central node is higher than that for the branch nodes. On the other hand, we also examine the case where the objective function is the average AoI normalized by the number of neighbors, where the objective function $f'(\mathbf{q})
= \frac{1}{N}\sum_{i=1}^{N} \frac{1}{\mathcal{N}(\mathcal{B}_i)}\sum_{j \in \mathcal{B}_i} \bar{\Delta}_{i,j}$. Note that this modification does not significantly alter our analysis. In this scenario, it is surprising to note that $ q_2 $ decreases much faster than $ q_1 $ and for the central node, its cumulative AoI is already averaged due to its larger number of neighbors, yet its optimal transmission rate becomes higher compared to the case without averaging.

\begin{figure}[t]
    \centering

    \begin{subfigure}{0.49\linewidth}
        \centering
        \includegraphics[width=\linewidth]{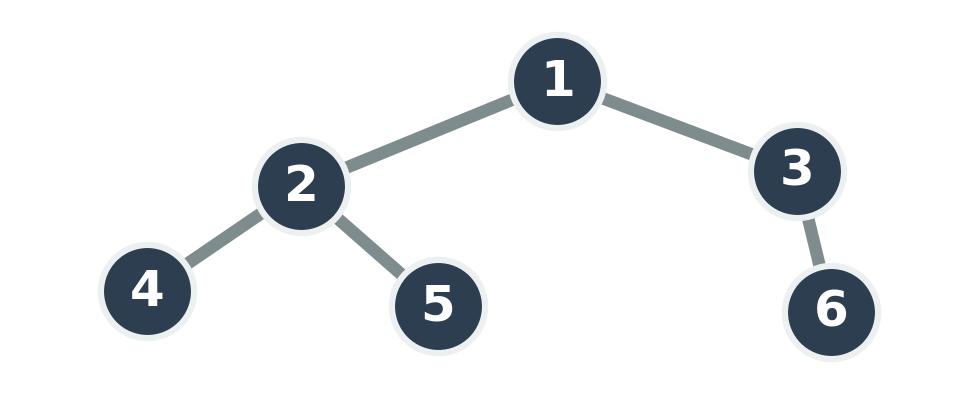}
        \caption{Tree topology.}
        \label{fig:sub1}
    \end{subfigure}\hfill
    \begin{subfigure}{0.49\linewidth}
        \centering
        \includegraphics[width=\linewidth]{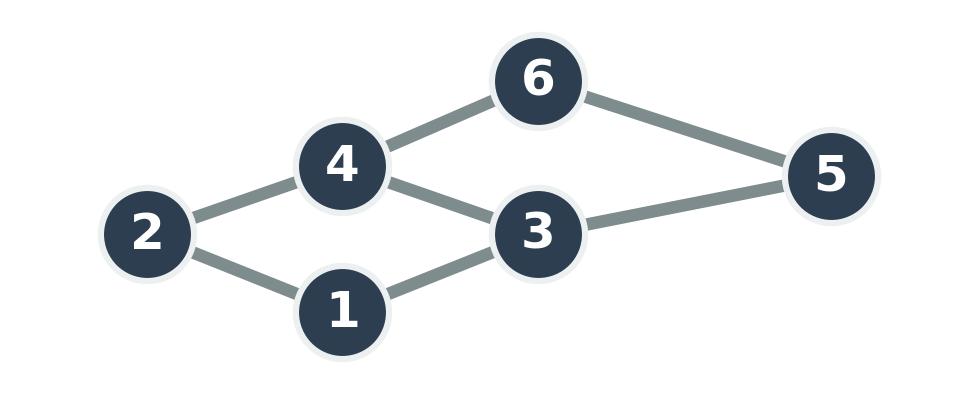}
        \caption{Grid topology.}
        \label{fig:sub2}
    \end{subfigure}

    \vspace{0.15cm}

    \begin{subfigure}{0.49\linewidth}
        \centering
        \includegraphics[width=\linewidth]{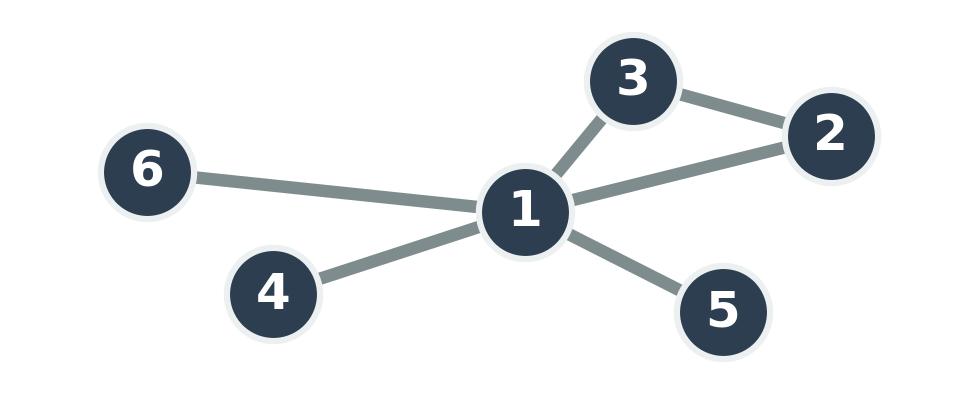}
        \caption{Asymmetric star topology.}
        \label{fig:sub3}
    \end{subfigure}\hfill
    \begin{subfigure}{0.49\linewidth}
        \centering
        \includegraphics[width=\linewidth]{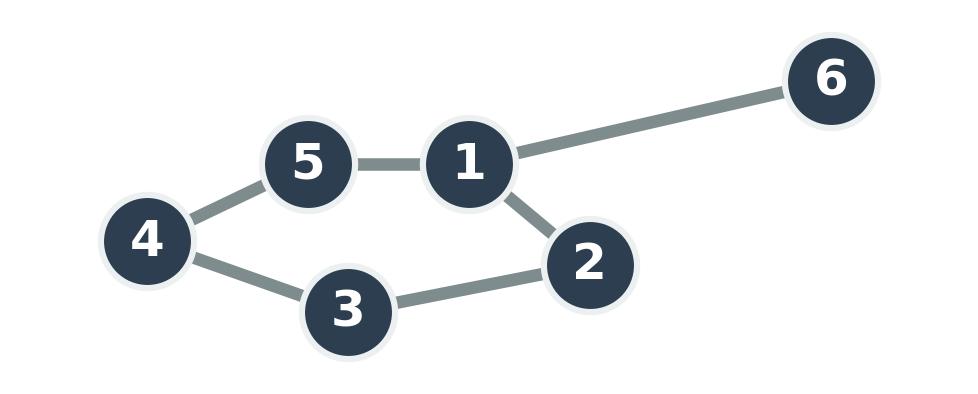}
        \caption{Asymmetric circle topology.}
        \label{fig:sub4}
    \end{subfigure}

    \caption{Four different asymmetric topologies.}
    \vspace{-5pt}
    \label{fig:Num3}
\end{figure}

\begin{figure}[!t] 	\centering\includegraphics[height=2.2in]{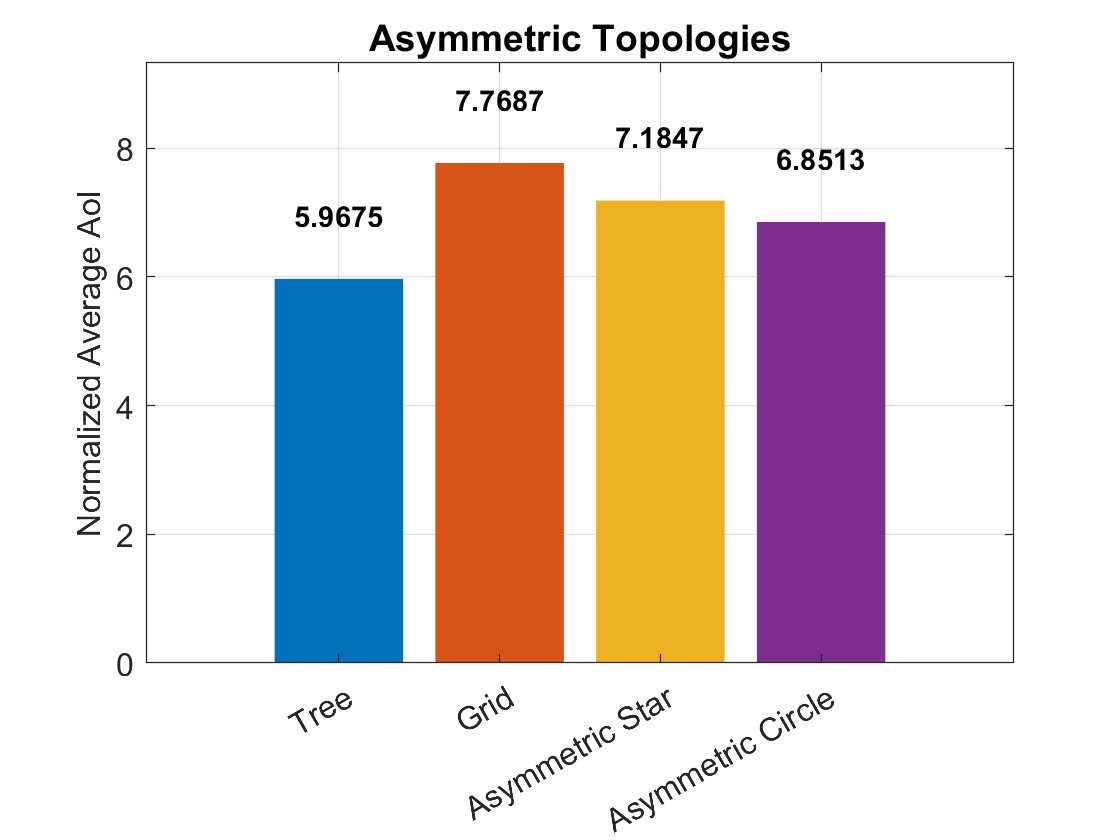}
    \vspace{-5pt}
	\caption{Normalized Average AoI for different asymmetric topologies with fixed $p=1$, $N=6$. } \label{Num3-b}\vspace{-0.2in}
\end{figure}

In Figure \ref{Num3-b}, we investigate the performance of normalized AoI across different asymmetric topologies. We compare tree topology, grid topology, asymmetric star topology, and asymmetric circle topology, as illustrated in Figure \ref{fig:Num3}. Our results demonstrate that the complexity of the network structure significantly affects the normalized AoI. If we assess network complexity by the average number of neighbors per node, we observe that networks with a larger average neighbor count tend to exhibit higher normalized AoI after normalization.

%% file: chapters/concl.tex
\section{Conclusion}
\label{sec:Con}
In this paper, we proposed a cooperative distributed networks framework where users dynamically switch between transmission and reception under half-duplex constraints. We introduced a system model to capture such adaptive interactions and designed an ALOHA-based transmission protocol to coordinate communication. A closed‑form expression for the AoI was derived, and an optimization study was conducted with respect to the probability for users to act as transmitters. By establishing the convexity of the formulated problem, we characterized optimal policies for general network graphs, obtained closed-form solutions for $d$-regular topologies, and derived tractable optimality conditions for star networks. Numerical results demonstrated that the proposed mechanism can effectively and adaptively determine user-specific optimal transmission probabilities across different network topologies.